\newtheorem{Lemma}{Lemma}
\newtheorem{proposition}{Proposition}
\def \P {\mathbf P}
\def \bF {\mathbb F}
\def \F {\mathcal F}
\def \R {\mathbb R}
\def \E {\mathbf E}
\newcommand{\esp}[2][\mathbb E] {#1\left[#2\right]}
\begin{document}

\title[Implicit incentives and Partial information]{Implicit Incentives for Fund Managers with Partial Information}


\author{Flavio Angelini  \and
Katia Colaneri \and
Stefano Herzel         \and
Marco Nicolosi
}\address{Flavio Angelini and Marco Nicolosi, Department of Economics, University of Perugia, Via A. Pascoli 20, 06123 Perugia, IT.}
\address{Katia Colaneri and Stefano Herzel, Department of Economics and Finance, University of Rome Tor Vergata, Via Columbia 2, 00133 Roma, IT.}\email{flavio.angelini@unipg.it}\email{katia.colaneri@uniroma2.it}\email{stefano.herzel@uniroma2.it}, \email{marco.nicolosi@unipg.it}



\date{}

\begin{abstract}
We study the optimal asset allocation problem for a fund manager whose compensation depends on the performance of her portfolio with respect to a benchmark. 
The objective of the manager is to maximise the expected utility of her  final wealth.
The manager observes the prices but not the values of the market price of risk that drives the expected returns.  The estimates of the market price of risk get more precise as more observations are available.
We formulate the problem as an optimization
 under partial information.
The particular structure of the incentives makes the objective function  not concave. We solve the problem via the martingale method and, with a concavification procedure, we obtain the optimal wealth and the  investment strategy.
A numerical example shows the effect of learning on the optimal  strategy.

\keywords{Portfolio Management \and Optimal Control \and Learning}

\end{abstract}

\maketitle

\section{Introduction}\label{s1}
The reward of a fund manager usually increases when the Asset Under Management (AUM) grows, while it decreases when the AUM shrinks.  The AUM may grow either because of a higher  value of the assets  or because  of new money flowing into the fund. Good  performances of the fund with respect to its relative benchmark are likely to attract new investors. Therefore, contracts based on the AUM create an implicit incentive for the manager to beat the benchmark.
We study the problem of a portfolio manager whose compensation depends on the AUM modelled through the relative performances with respect to a benchmark. This framework generalizes the setting of Basak, et al. (2007)\cite{BPS2007}, by considering a market model with one risk-free  and one risky asset whose expected returns depend on an unobservable stochastic process, the “market price of risk”. We introduce the realistic assumption  that the manager has a limited knowledge on the market,  she can only observe stock prices  and estimates the market price of risk from them. Therefore, the manager is facing an optimization problem under partial information.

Optimization problem under partial information  are usually solved in two steps: the first step, called reduction, consists of deriving the conditional distribution of the  market price of risk with respect to the observed information flow; the second step solves the equivalent  problem under the observed information. An important feature of our setting is that, while the market of claims contingent to the knowledge of the market price of risk is incomplete, the market restricted to those claims contingent only to stock prices is instead complete. We will exploit this fact to solve the optimization problem applying a martingale approach with the unique equivalent martingale measure (under the restricted setting) and then using a concavification argument to determine the unique optimal solution.

Although many papers have been written  about  the issues of relative incentives and of optimization under partial information (see the literature review provided in Section \ref{lit_rev}), this one is, to the best of our knowledge, the first one to
analyse the combined effect of such issues on the optimal strategy of a portfolio manager.
We contribute to the literature by providing the solution to the optimization problem in semi-closed form and we present one
example where we show that the optimal strategy  depends on the risk aversion of the manager and on the economic situation of the market. When the risk aversion of the manager is larger (lower) than that of a manager equipped with a logarithmic utility, she will tend to decrease (increase)  her investment in the risky asset to hedge against the future adjustment in the estimates of the unknown parameter.

The paper is organized as follows.
After a literature review (Section \ref{lit_rev}), Section 2  presents the market model and the portfolio optimization problem faced by the manager. In Section 3 we solve the optimization problem in two steps. First, we derive the dynamics of the filtered estimate of market price of risk, in order to reduce the problem to a common information flow. This procedure allows to obtain market dynamics driven by a unique source of randomness and hence the market model under partial information turns out to be complete. Second, we apply the martingale method along with concavification to characterize the optimal final wealth and the optimal investments strategy. Section 4 contains a numerical illustration of our results. Conclusions and comments  are provided in Section 5. We relegated  proofs and calculations  to Appendix \ref{optfinval}--\ref{app:ODEs}.

\subsection{Literature review}\label{lit_rev}
 The structure of portfolio managers' compensation is studied for instance in Ma et al. (2019)\cite{MTG2019}, who show that performance based incentives represent the main form of compensation for portfolio managers in the US mutual fund industry. Of course, this is not the only type of incentive for fund managers. Option-like incentives of different nature (as for example management fees, investor's redemption options or funding options by prime brokers) apply in fund managers compensation contract, and influence manager's leverage decisions (see, e.g. Lan et al. (2013)\cite{LWY2013} and Buraschi et al. (2014)\cite{buraschi2014})

Basak et al. (2007)\cite{BPS2007} compute the optimal strategy followed by  the manager under the assumption that she knows exactly the parameters driving the asset price process. They show that, when at an intermediate date the  return of the fund is either very low or very large compared to the benchmark, the manager forgets abut the implicit incentives determined by the fund-flows and reverts to the normal strategy, that is the one determined by Merton (1971)\cite{merton1971}. However, when the
current  return is closer to the benchmark, the manager tilts her strategy from the Merton level to try to beat the  benchmark. 
Nicolosi et al. (2018)\cite{NHA} extend their framework to consider mean-reversion either in the market price of risk or in the volatility.
Basak et al. (2008)\cite{BPS2008} introduce additional restrictions on the set of admissible strategies to contrast the tendency of managers to increase riskiness when their portfolio under-performs the benchmark, in order to align managers' scope to that of investors.
The optimal allocation problem for institutional investors concerned about their performance with respect to a benchmark index is studied in Basak and Pavlova (2013)\cite{BP2013}. The objective there is to show how incentives influence the prices of the assets hold by institutional investors. In particular the authors found that, differently from standard investors, institutions tend to form portfolios of stocks that compose the benchmark index, they push up prices of stocks in the benchmark index by generating excess demand for index stocks and induce excess correlation among these stocks.
Carpenter (2000)\cite{carpenter2000} analyses the optimal investment problem of a risk adverse manager who is compensated with a call option on the asset under management. In this paper the  market model is assumed to be complete and the non-concavity of the objective function is addressed by introducing a {\em concavification} argument and showing that the optimal solution takes values on a set where the original non-concave objective function is equal to the minimal concave
function dominating it. An explicit solution to this problem in the Black-Scholes setting is provided in Nicolosi (2018)\cite{nic2018} while  Herzel and Nicolosi (2019)\cite{HN} extend the solution to the case of mean-reverting processes. The impact of commonly observed incentive contracts on managers’ decisions is also studied in Chen and Pennacchi (2009)\cite{CP2009}, where the authors found that for particular compensation structure, when a fund is performing poorly, the deviation from the benchmark portfolio is larger than in case of good performance.

Other important contributions on the literature of delegated portfolio management problem include  Cuoco and Kaniel (2011)\cite{CK2011}, who investigate the case where managers receive a direct compensation, related to the performance, from investors and discuss asset price implications in equilibrium.
Different compensation schemes have been considered, for instance, in Barucci and Marazzina (2016)\cite{BM2016} in a portfolio optimization problem for a manager who is remunerated through a High Water Mark incentive fee and a management fee and in Barucci et al. (2019)\cite{BMM2019} where a penalty on the remuneration is applied if the fund value falls below a fixed threshold, namely a minimum guarantee.

Optimal asset allocation under partial information has been widely studied in the literature. Brendle (2006) \cite{brendle2006} considered the optimal investment problem for a partially informed investor endowed with bounded CRRA preferences in a market model driven by an unobservable market price of risk via the HJB approach. Hata et al. (2018) \cite{hata2018} also included consumption. A more general setting, not necessarily Markovian, has been analysed for instance in Bj\"ork et al. (2010) \cite{BDL} and Lindensjo (2016)\cite{lind2016}, under the assumption of market completeness. The optimization problem in these papers is solved using the Martingale approach.

The partial information case in a delegated portfolio management has been considered in the recent literature by Barucci and Marazzina (2015)\cite{BarucciMarazzina2015} in a slightly different setting compared to ours, where market is subject to two regimes, modelled via a continuous time two-state Markov chain and in Huang et al. (2012)\cite{HKH2012} where investment learning is studied under a Bayesian approach.

Other contributions in the case where prices are modelled as diffusions are Lackner (1995, 1998)\cite{lackner1995,lackner1998}. Brennan (1998)\cite{brennan1998} and Xia (2001)\cite{xia2001}  study the effect of learning on the portfolio choices, and Colaneri et al. (2020) \cite{CHN2020} address the problem of computing the price that a partially informed investor would pay to access to a better information flow on the market price of risk.
Investment problems in a market with cointegrated assets under partial information are studied in some recent works as for instance Lee and Papanicolaou (2016) \cite{lee2016} and Altay et al. (2018, 2019) \cite{altay2018,altay2019}.


\section {Market model and the portfolio optimization problem}\label{sec:market}
We fix a probability space $(\Omega, \mathcal F, \mathbb{P})$. Let $\mathbb{F}=\{\mathcal{F}_t, t\geq 0\}$ be a complete and right continuous filtration representing the global information. We consider a market model with one risky asset with price $S_t$, the {\em stock}, and one risk-free asset with price $B_t$.
We assume that the price of the risk-free asset follows
\begin{equation}
\frac{d B_t}{B_t} = r   dt \label{generalmodel2}
\end{equation}
with the constant $r>0$ representing the constant interest rate. The risky asset price is modelled by a geometric diffusion
\begin{equation}
\frac{d S_t}{S_t} = \mu_t   dt + \sigma dZ^S_t \label{generalmodel1}
\end{equation}
where $Z^S_t$ is a one dimensional standard Brownian motion, $\sigma>0$ is the constant volatility and the drift is the process
 \begin{equation}
\mu_t = r + \sigma X_t,
\label{drift}
\end{equation}
which depends linearly on the market price of risk $X_t$.
The process $X_t$ satisfies
\begin{equation}
d{X_t} = -\lambda ({X_t}-\bar{X})dt+ \sigma_X dZ_t^X,
\label{XStoc}
\end{equation}
where  $\lambda >0$ is a constant representing  the strength of attraction toward the long term expected mean $\bar{X}$, $\sigma_X > 0$ is  the volatility of
the market price of risk and  $Z^X$ is a one-dimensional standard Brownian motion correlated with $Z^S$ with correlation $\rho \in [-1,1]$.

We assume that the market price of risk is a latent variable that is not directly observed, and its value can only be derived through the observation of $S_t$. That means that the available information  is given by the filtration $\mathbb{F}^S:=\{\F^S_t,\ t \in [0,T]\}$, generated by the process $S$ \footnote{At any time $t$, $\F^S_t$ is the right continuous and complete $\sigma$-algebra generated by the process $S$ up to time $t$. Specifically, $\F^S_t:=\sigma\{S_u, \ 0\leq u \leq t\}\vee \mathcal O$ where $\mathcal O$ is the collection of all $\P$-null sets. Notice that $\F^S_t \subset \F_t$, which models the fact the manager has a restricted information on the market.}.
Let us note that, since there are two risk factors $Z^S$ and $Z^X$, but only one traded asset besides the money market account, this market model is incomplete.

We study the problem of
a fund manager who trades  the two assets, $S_t$ and $B_t$, continuously in time on $[0,T]$, starting from an initial capital $w$. We assume that the stock does not pay dividends before time $T$. We describe the trading strategy of the manager by a process $\theta=\{\theta_t, \ t \in [0,T]\}$ representing the fraction of wealth invested in the risky asset at any time $t \in [0,T]$. We only consider trading strategies that are self-financing and based on the available information, hence defining an {\em admissible} strategy as a self-financing trading strategy, adapted to the filtration $\bF^S$ and, to prevent arbitrage from doubling strategies,  such that
\begin{equation}\label{eq:integrability}
\E\left[\int_0^T\left(|\theta_t X_t|+\theta^2_t\right)   d t\right] <\infty.
\end{equation}
The set of all admissible strategies is denoted by $\mathcal A^{S}$.
The wealth process generated by an admissible strategy $\theta_t$ is
\begin{equation}
\frac{dW_t}{W_t} = (r + \theta_t \sigma X_t)dt+\theta_t\sigma dZ^S_t, \quad W_0=w>0.
\label{BC}
\end{equation}

The manager's compensation is implicitly determined by the value of the AUM at time $T$, according to $f_T(W_T,Y_T) W_T$, where $f_T$ is the new funds flow from investors at time $T$ depending on the relative performance of the portfolio with respect to a benchmark $Y$.
The benchmark $Y$ is
the value of the constant strategy
 $\beta$ and hence it follows
\begin{equation*}
\frac{dY_t}{Y_t} = (r + \beta \sigma X_t)dt+\beta
\sigma dZ^S_t .
\end{equation*}
The continuously compounded returns on the manager's
portfolio and on the benchmark over the period $[0, t]$ are given by
$R^W_t = \ln \frac{W_t}{W_0}$ and
$R^Y_t = \ln \frac{Y_t}{Y_0}$, respectively. To compare relative performances, we  set $Y_0 = W_0$. The difference $R^W_T-R^Y_T$ provides the {\em tracking error} of the final wealth relative to the benchmark.
The funds flow to relative performance relationship is described by function $f_T$
\begin{equation}\label{eq:compensation}
f_T(W_T,Y_T) = \left\{
\begin{array}{ccc}
f_L  &  \text{if}  & R^W_T-R^Y_T<\eta_L  \\
f_L + \psi \cdot(R_T^W -R_T^Y - \eta_L)  & \text{if}   &  \eta_L \le R_T^W
-R_T^Y <\eta_H \\
f_H :=  f_L + \psi \cdot (\eta_H - \eta_L)  &  \text{if}  &   R_T^W -R_T^Y
\ge\eta_H
\end{array}
\right.
\end{equation}
with $f_L > 0, \psi >0$, and $\eta_L \leq
\eta_H$  and it is illustrated in Figure \ref{PayoffFig}.
\begin{figure}[htbp]
\begin{center}
\centerline{\includegraphics[{scale=0.6}]{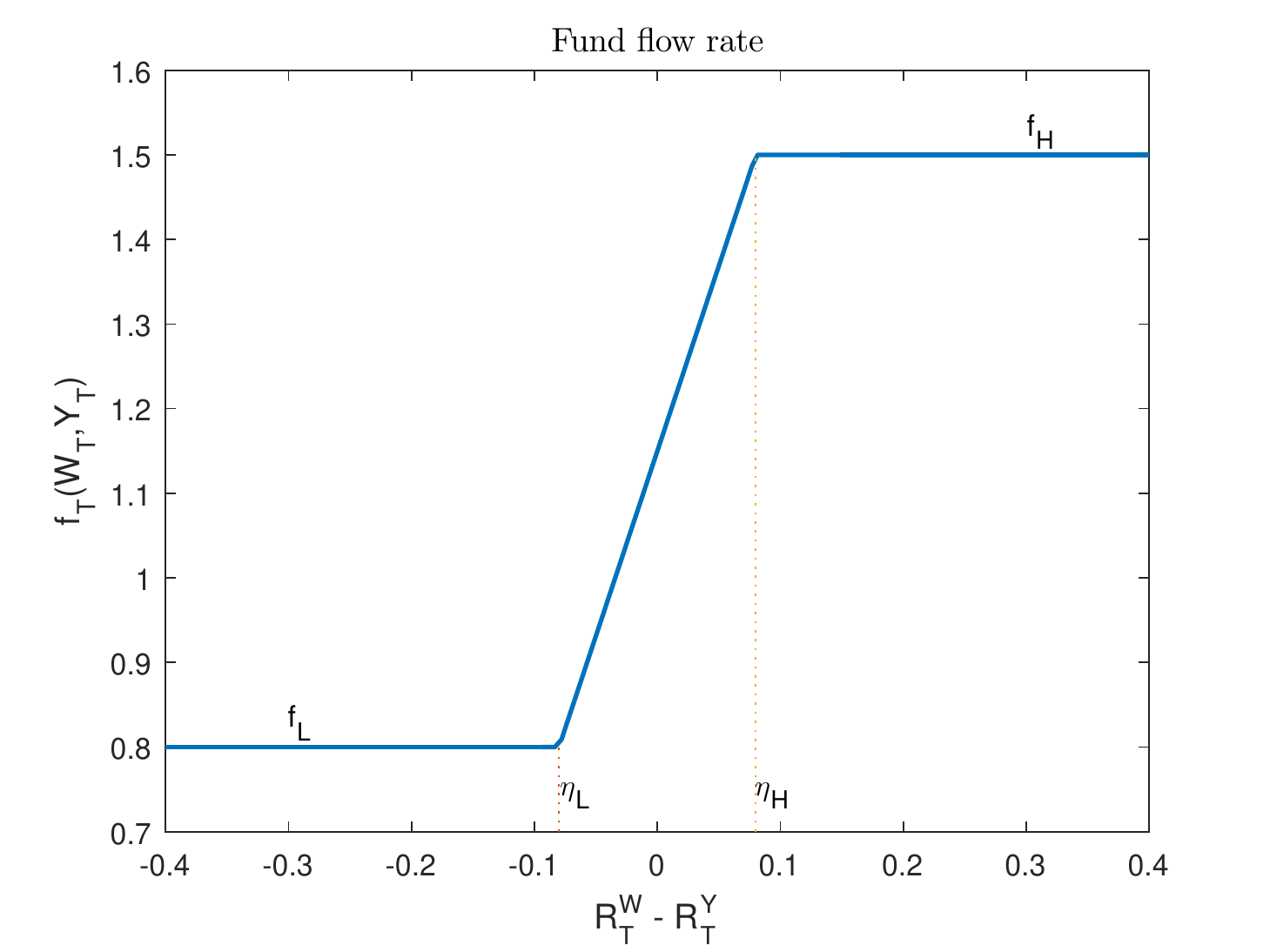}}
\caption{The funds flow $f_T(W_T,Y_T)$ as a function of relative performance $R_T^W -R_T^Y$, with parameters $f_L=0.8$, $f_H=1.5$, $\eta_L=-0.08$ and $\eta_H=0.08$.}
\label{PayoffFig}
\end{center}
\end{figure}
This simplified structure of the funds flow to relative performance relationship, called in the literature {\em collar type}, shows that if the manager return is below the benchmark return of at least $\eta_L$ or above the benchmark return of at least $\eta_H$, the flow rate received by the fund is flat (with different rates $f_L<f_H$). When the relative performance, measured in terms of tracking error, is between $\eta_L$ and $\eta_H$, the flow function is a linear segment with a positive slope. The function $f_T$ also has two kinks when the difference $R_T^W -R_T^Y$ reaches the levels $\eta_L$ and $\eta_H$.
The funds flow to relative performance relationship in Equation \eqref{eq:compensation} was proposed by Basak et al. \cite{BPS2007} to describe an implicit incentive scheme and it is based on the empirical analysis of  Chevalier and Ellison (1997) \cite{chevalier1997}.
The idea is that, if the fund under-performs with respect to the benchmark, investors tend to withdraw their money, the AUM decreases, and the manager receives a lower compensation. The opposite happens in the case of over-performance. Citing Basak et al. \cite{BPS2007}: ``(...) this simple way of modeling fund flows is able to capture most of the insights pertaining risk-taking incentives of a risk averse manager".


The manager maximizes the expected utility of her implicit incentives over the set of  admissible strategies $\mathcal A^{S}$,
\begin{equation}\label{opt_pr}
\max_{\theta \in \mathcal{A}^{S}}E \left[u(W_T f_T (W_T,Y_T))\right]
\end{equation}
with initial budget $W_0=w$.
We assume that  the manager is endowed with a power utility function
\begin{equation}
u(x) = \frac{1}{1-\gamma}x^{1-\gamma},  \label{utility}
\end{equation}
with nonnegative risk aversion parameter $\gamma \neq 1$. The case $\gamma=1$ corresponds to the logarithmic utility.
Since the market price of risk is not observable, this is  an optimization problem under restricted information. To solve it, we first  reduce it to a setting with a common information flow by replacing the unobservable process $X_t$  with its conditional expectation. This standard procedure allows us to consider  an equivalent optimization problem under the available information, see, e.g. Fleming and Pardoux (1982)\cite{fleming1982optimal}. We characterize the conditional expectation of $X_t$ in the next section via Kalman filtering.

\section{Optimal wealth and strategies}\label{sec:full_info}


In this section we solve the  problem \eqref{opt_pr}. The first step is to estimate  the unobservable market price from  stock prices.
Applying the Kalman filtering theory\footnote{Notice that the stock price process $S$ and its log-return generate the same type of information. This is a key feature since the drift of the log-return is a linear function of $X$, and hence the Kalman filter applies. The same setting has been considered for instance in Colaneri et al. (2020)\cite{CHN2020}.} we get that the conditional distribution of  market price of risk is Gaussian with conditional mean $\pi_t=\E\left[X_t | \F^S_t\right]$, and conditional variance  $R_t:= \E\left[\left(X_t-\esp{X_t|\F^S_t}\right)^2|\F^S_t\right]$. To derive $\pi_t$ and $R_t$ we
introduce the {\em innovation} process
\begin{equation}\label{eq:innovation}
I_t:= Z^S_t+\int_0^t (X_u-\pi_u) d u.
\end{equation}
It is well known (see, e.g. Lipster and Shiryaev (2001)\cite{lipster2001statistics} or Ceci and Colaneri (2012, 2014)\cite{ceci2012,ceci2014}) that $I_t$ is a Brownian motion with respect to the observable filtration $\bF^S$. The proposition below, proved for instance in Lipster and Shiryaev (2001)\cite{lipster2001statistics}, provides the dynamics of $\pi_t$ and $R_t$.
\begin{proposition}
The conditional mean and variance of the market price of risk satisfy the equations
\begin{equation}\label{eq:cond_exp}
d \pi_t=-\lambda (\pi_t- \bar X) d t + (R_t+\rho\sigma_X) d I_t, \quad \pi_0\in \R,
\end{equation}
\begin{align}\label{eq:R}
d R_t = \left[\sigma_X^2 -2\lambda R_t -(R_t+\rho\sigma_X)^2\right] d t, \quad R_0\in \R^+.
\end{align}
\end{proposition}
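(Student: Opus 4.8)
The plan is to recognize \eqref{XStoc} together with the price equation \eqref{generalmodel1} as a linear, conditionally Gaussian system and to invoke the Kalman--Bucy filtering theorem (Lipster and Shiryaev \cite{lipster2001statistics}). First I would bring the observation into standard form: It\^o's formula applied to $\ln S_t$ gives $d\ln S_t=(r-\tfrac12\sigma^2+\sigma X_t)\,dt+\sigma\,dZ^S_t$, so the process $\xi_t:=\sigma^{-1}\big(\ln(S_t/S_0)-(r-\tfrac12\sigma^2)t\big)$ satisfies $d\xi_t=X_t\,dt+dZ^S_t$ and generates the same filtration, $\F^\xi_t=\F^S_t$, since the map between $\xi$ and $\ln S$ is a deterministic pathwise bijection. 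Thus the manager observes a signal whose drift is linear in the latent state $X_t$.

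Next I would identify the coefficients of the pair $(X,\xi)$: the state drift is affine in $X_t$ with slope $-\lambda$ and intercept $\lambda\bar X$, the observation drift is $X_t$ itself, the state and observation noises have constant volatilities $\sigma_X$ and $1$, and, writing $Z^X_t=\rho Z^S_t+\sqrt{1-\rho^2}\,\tilde Z_t$ with $\tilde Z$ a Brownian motion independent of $Z^S$, the cross–quadratic variation is $d\langle Z^X,Z^S\rangle_t=\rho\,dt$. Since all coefficients are deterministic and the driving noises are jointly Gaussian, the conditional law of $X_t$ given $\F^S_t$ is Gaussian and hence fully described by $\pi_t$ and $R_t$; in particular the conditional third central moment vanishes, which is precisely what makes the moment equations close.

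For the dynamics I would introduce the innovation $I_t$ of \eqref{eq:innovation}, which by L\'evy's characterization is an $\F^S$-Brownian motion (as cited in the text before the statement), and note $dI_t=d\xi_t-\pi_t\,dt$. Projecting \eqref{XStoc} onto $\F^S_t$ and using the Fujisaki--Kallianpur--Kunita representation, the finite-variation part of $\pi_t$ is inherited from the state drift as $-\lambda(\pi_t-\bar X)\,dt$, while the martingale part is an $I$-integral whose integrand is the Kalman gain $R_t+\rho\sigma_X$: the summand $R_t=\E[(X_t-\pi_t)X_t\mid\F^S_t]$ is the usual innovations gain coming from the conditional variance, and the additive $\rho\sigma_X$ comes from the correlation term $d\langle X,\xi\rangle_t=\rho\sigma_X\,dt$. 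This yields \eqref{eq:cond_exp}. For \eqref{eq:R}, I would apply It\^o's formula to $(X_t-\pi_t)^2$, subtract its $\F^S_t$-conditional expectation so that the stochastic integrals (genuine martingales here) drop out, and collect the drift terms to obtain $dR_t=\big[\sigma_X^2-2\lambda R_t-(R_t+\rho\sigma_X)^2\big]\,dt$, a deterministic Riccati equation; local Lipschitzness of its right-hand side together with the a priori bounds $R_t\ge0$ (it is a variance) and an upper comparison estimate give global existence and uniqueness from $R_0\in\R^+$.

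The main obstacle is the correlation $\rho\neq0$ between state and observation noise, which rules out the textbook uncorrelated Kalman formulas: one must carefully track the extra cross-variation that shifts the gain from $R_t$ to $R_t+\rho\sigma_X$ and that correspondingly modifies the Riccati equation. A secondary technical point is to check the regularity and integrability hypotheses of the filtering theorem in our setting — these hold because the coefficients are bounded and deterministic and $X$ has finite second moments — and to confirm that the Gaussian property is genuinely propagated, so that no higher conditional moments enter \eqref{eq:cond_exp}--\eqref{eq:R}.
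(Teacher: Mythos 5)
Your derivation is correct and follows essentially the same route as the paper, which does not prove the proposition itself but cites the Kalman--Bucy filtering theory of Liptser and Shiryaev for conditionally Gaussian systems; your reduction to the standard observation form via $\ln S_t$, the innovations/FKK argument, and the careful tracking of the correlation term that shifts the gain to $R_t+\rho\sigma_X$ and modifies the Riccati equation are exactly the ingredients of that cited proof.
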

From \eqref{eq:R} we see that the conditional variance of the market price of risk is deterministic and satisfies a Riccati ordinary differential equation.
Using Equation \eqref{eq:innovation}, we also get the equivalent dynamics of the stock, the wealth and the benchmark under partial information
\begin{align}
\frac{d S_t}{S_t}&= (r + \sigma \pi_t) d t +  \sigma d I_t,\label{S_pi} \\
\frac{d W_t}{W_t}&= (r + \theta_t \sigma\pi_t) d t + \theta_t \sigma dI_t, \label{W_pi} \\
\frac{d Y_t}{Y_t}&= (r + \beta \sigma\pi_t) d t + \beta \sigma dI_t.\label{Y_pi}
\end{align}
All processes are  driven by  the innovation process  and hence the sub-market restricted to those claims that can be replicated by strategies in $\mathcal A^S$ is  complete.
We solve the optimization problem \eqref{opt_pr} using the martingale method (see, for instance Cox and Huang (1989)\cite{Cox and Huang}),  transforming the dynamic  optimization problem \eqref{opt_pr} where the control variable is a strategy into an equivalent static  problem where the control variable is the terminal wealth.

To identify the terminal wealths  reachable  from the initial budget $w$ with feasible strategies, we introduce
the unique state price density process
\begin{equation}
\frac{d \xi_t}{ \xi_t} = -r dt - \pi_t dI_t, \quad  \xi_0=1.
\end{equation}
The static optimization problem, equivalent to  \eqref{opt_pr} is
\begin{equation}
\max_{ W_T} E[u( W_T f_T ( W_T,Y_T))],
\label{static_pr_pi}
\end{equation}
with  budget constraint
\begin{equation}
w = E[\xi_T  W_T]
\label{static_BC_pi}
\end{equation}
The objective function in problem \eqref{static_pr_pi}--\eqref{static_BC_pi} is not concave in  $W_T$. To overcome this issue we  apply the concavification procedure proposed by Carpenter (2000)\cite{carpenter2000}. Following the approach in Proposition 2 of Basak et al. (2007)\cite{BPS2007}, we define the optimal final wealth relative to the benchmark, which is given by $V_T=\frac{ W^\star_T}{Y_T}$, where $W^\star_T$ is the optimal final wealth in problem \eqref{static_pr_pi}-\eqref{static_BC_pi}. One of the advantages of working with this quantity is that it has an explicit representation 
(e.g. equation $(A7)$ in Basak et al. (2007)\cite{BPS2007} or equation $(6)$ in Nicolosi et al. (2018)\cite{NHA}) given, for completeness, by equation \eqref{VT} in Appendix \ref{optfinval}. One key characteristic is that $V_T$ is a function of $\zeta_T:= \xi_T Y_T^\gamma$ only.
Computing $V_T$, enables us to characterize the optimal terminal wealth. However this is not sufficient to obtain the trading optimal portfolio strategy, for which, we need to know the value of optimal wealth at any time $t\in [0,T]$, and consequently, we must determine the relative wealth $V_t= \frac{ W^\star_t}{Y_t}$. To compute $V_t$ we consider the {\em benchmarked market}, where we discount all processes with the num\'{e}raire $Y_t$\footnote{Notice that the benchmark is a positive self-financing portfolio, and hence it can be taken as num\'{e}raire.}. Due to market completeness there exists an equivalent risk neutral probability measure $Q$ for the benchmarked market. Put in other words there exists a probability measure $Q$ which is equivalent to $P$ and such that the price process of any {\em benchmarked} traded asset (i.e. any traded asset discounted with $Y_t$), is a martingale under $Q$\footnote{Introducing the measure $Q$ allows us to circumvent technical difficulties: for instance, to get the optimal wealth $W^\star_t$ under the physical measure $P$, one should know the joint distribution of $Y^\gamma$ and $\xi$. This is unnecessary if we perform the change of measure, where one can use the martingale property and get the distribution of $W^\star_t$ more directly. See, e.g. Proposition 2 in Basak et al. (2007)\cite{BPS2007}}.

We define the process $\zeta_t = \xi_t Y_t^{\gamma}$ and derive its distribution under $Q$. Let the conditional moment generating function of $\ln(\zeta_t)$ under the measure $Q$ be given by
$$
 H(t,\zeta,\pi;z) =E^Q \left[  \zeta_T^{z} \vert  \F^S_t\right] = E^Q \left[  \zeta_T^{z} \vert \zeta_t = \zeta ,\pi_t = \pi \right]
$$
for some complex number $z$. The function $H(t,\zeta,\pi;z)$ plays a key role in solving the optimization problem (see Proposition \ref{Vtthm} below). It is characterized in the following technical lemma:
\begin{Lemma}\label{lemmaH}
Under the usual regularity conditions, the conditional moment generating function of $\ln(\zeta_T)$ under the measure $Q$  is given by:
\begin{equation}
H(t,\zeta,\pi;z) = \zeta^{z}  e^{A(t;z)+B(t;z) \pi+\frac{1}{2}
C(t;z) \pi^2}
\label{HXSTOCN}
\end{equation}
where $A(t;z)$, $B(t;z)$ and $C(t;z)$ are deterministic functions satisfying a system of Riccati Equations. 
\end{Lemma}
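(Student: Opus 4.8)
The plan is to reduce the computation of $H$ to a linear parabolic problem for the Markov pair $(\zeta_t,\pi_t)$ under $Q$ and to solve it with an exponential--quadratic ansatz. First I would compute the dynamics of $\zeta_t=\xi_tY_t^{\gamma}$: applying It\^o's formula to the product of the state price density (whose law is $d\xi_t/\xi_t=-r\,dt-\pi_t\,dI_t$) and $Y_t^{\gamma}$, with $Y$ as in \eqref{Y_pi}, gives $d\zeta_t/\zeta_t=m(\pi_t)\,dt+(\gamma\beta\sigma-\pi_t)\,dI_t$ for an explicit second--degree polynomial $m$. Next I would identify the measure $Q$: imposing that the benchmarked money-market account $B_t/Y_t$ (equivalently the benchmarked stock $S_t/Y_t$) be a $Q$-martingale forces the Girsanov shift $dI^Q_t=dI_t+(\pi_t-\beta\sigma)\,dt$, and $I^Q$ is a $Q$-Brownian motion on $\bF^S$. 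Rewriting \eqref{eq:cond_exp} and the $\zeta$-dynamics in terms of $I^Q$, the process $\pi_t$ becomes a (time-inhomogeneous, through $R_t$, cf.\ \eqref{eq:R}) affine diffusion while $\ln\zeta_t$ acquires a drift quadratic in $\pi_t$ with diffusion coefficient affine in $\pi_t$; hence $(\zeta_t,\pi_t)$ is Markov, so $H(t,\zeta,\pi;z)=E^Q[\zeta_T^{z}\mid \zeta_t=\zeta,\pi_t=\pi]$, and, since $\ln\zeta_T-\ln\zeta_t$ is a functional of the path of $\pi$ on $[t,T]$ alone (because $I^Q$ itself can be recovered from that path), the factor $\zeta^{z}$ in \eqref{HXSTOCN} comes out.

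Then I would invoke the Feynman--Kac formula: under the stated regularity conditions $H$ solves $\partial_tH+\mathcal L^{Q}H=0$ on $[0,T)$ with terminal datum $H(T,\zeta,\pi;z)=\zeta^{z}$, where $\mathcal L^{Q}$ is the generator of $(\zeta,\pi)$ under $Q$. Substituting the ansatz $H=\zeta^{z}\exp\big(A(t;z)+B(t;z)\pi+\tfrac12 C(t;z)\pi^{2}\big)$ and using the identities $\zeta\partial_\zeta H=zH$, $\zeta^{2}\partial_{\zeta\zeta}H=z(z-1)H$, $\partial_\pi H=(B+C\pi)H$, $\partial_{\pi\pi}H=\big((B+C\pi)^{2}+C\big)H$ and $\zeta\partial_{\zeta\pi}H=z(B+C\pi)H$, every term of the equation becomes a polynomial in $\pi$ of degree at most two. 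Dividing by $H$ and matching the coefficients of $\pi^{2}$, $\pi^{1}$ and $\pi^{0}$ yields a closed system of three ODEs with terminal conditions $A(T;z)=B(T;z)=C(T;z)=0$: the $\pi^{2}$-equation is a Riccati equation for $C$ (the $C^{2}$ term originating from $\tfrac12(R_t+\rho\sigma_X)^{2}C^{2}$), the $\pi^{1}$-equation is linear in $B$ once $C$ is known, and $A$ is recovered by a direct quadrature. This is exactly \eqref{HXSTOCN}.

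The delicate part --- the content of the ``usual regularity conditions'' --- is the verification that the function assembled from this ODE system coincides with the conditional expectation. This requires, first, that the Riccati equation for $C$ admit a solution on the whole of $[0,T]$, i.e.\ not blow up before $T$, which restricts the range of $z$ and of the model parameters for which \eqref{HXSTOCN} is valid; and second, an integrability argument guaranteeing that the local martingale obtained by applying It\^o's formula to $t\mapsto H(t,\zeta_t,\pi_t;z)$ is a true $Q$-martingale --- for instance by localization together with a uniform-integrability bound for the exponential-quadratic functional of the Gaussian process $\pi$, or via a Novikov-type estimate. The remaining steps --- the $\zeta$-dynamics, the change of measure and the coefficient matching --- are routine computations.
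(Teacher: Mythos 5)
Your proposal is correct and follows essentially the same route as the paper: derive the $Q$-dynamics of $(\zeta_t,\pi_t)$ via the Girsanov shift $dI^Q_t=dI_t+(\pi_t-\beta\sigma)\,dt$, use the martingale property of $H(t,\zeta_t,\pi_t;z)$ (equivalently Feynman--Kac) to obtain the PDE with terminal datum $\zeta^z$, and match coefficients of $\pi^2$, $\pi$, $1$ under the exponential--quadratic ansatz to get the non-homogeneous Riccati system with $A(T;z)=B(T;z)=C(T;z)=0$. Your added discussion of the verification step (global existence of the Riccati solution and the true-martingale property) makes explicit what the paper hides behind ``the usual regularity conditions,'' but it does not change the argument.
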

The proof of Lemma \ref{lemmaH} and the Riccati equations for the functions $A(t;z)$, $B(t;z)$ and $C(t;z)$ are given in Appendix \ref{lemma} and goes along the same lines as in Nicolosi et al. (2018)\cite{NHA}. We remark that in this particular case, because both the drift and volatility in the dynamics of the filter $\pi_t$  are not constant \eqref{eq:cond_exp}, the coefficients of the Riccati equations that characterize the functions $A, B$ and $C$ are time-dependent. The solutions of non-homogeneous Riccati equations are discussed in Appendix \ref{app:ODEs}.

In the next step we use Fourier Transform to compute the optimal relative wealth and the optimal strategy at any time $t\leq T$.

%

\begin{proposition}\label{Vtthm}
Let $R_j$, for  $j=1,2,3,4$ be real numbers such that $R_1  < -1/\gamma$, $ R_4 > -1/\gamma$  and
\begin{equation} \label{not_explodes}
 H(t,\zeta,\pi;R_j) = E^Q \left[  e^{ R_j \ln(\zeta_T)} \vert \zeta_t = \zeta ,\pi_t = \pi \right] <
\infty.
\end{equation}
Then,
\begin{itemize}
\item[(i)]the relative wealth $V_t$ is given by
\begin{equation}
V_t = \frac{1}{2\pi}\sum_{j=1}^4
\int_{-\infty}^{+\infty}\hat \varphi_j(u+iR_j) H(t,\zeta,\pi;R_j-iu) du\label{Vt}
\end{equation}
where 
the functions $\hat \varphi_j(z)$ of the complex variable $z$ are given in Appendix \ref{optfinval};
\item[(ii)]the optimal strategy  is
\begin{equation}
\theta_t = \beta+\frac{1}{ \sigma V_t}\left(\frac{\partial V}{\partial \zeta}\zeta_t(\gamma\beta \sigma - \pi_t)+  \frac{\partial V}{\partial \pi}  (R_t+\rho\sigma_X) \right)
\label{theta_t}.
\end{equation}
\end{itemize}
\end{proposition}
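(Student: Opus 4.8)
The plan is to combine the completeness of the benchmarked market with the explicit form of the relative terminal wealth $V_T$, turning the conditional expectation that represents $V_t$ into a Fourier integral of the moment generating function $H$ of Lemma~\ref{lemmaH}. Since the optimal terminal wealth $W^\star_T$ is attainable in the benchmarked market and $Q$ is its martingale measure, the process $V_t=W^\star_t/Y_t$ is a $Q$-martingale, so $V_t=E^Q[V_T\mid\F^S_t]$. The pair $(\zeta_t,\pi_t)$ is a (time-inhomogeneous) Markov process with respect to $\bF^S$ under $Q$: $\pi$ solves the autonomous equation \eqref{eq:cond_exp} driven by $I$, and $\zeta=\xi Y^{\gamma}$ solves an SDE driven by $I$ whose coefficients depend only on $\pi$. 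Because $V_T$ is a deterministic function of $\zeta_T$ alone — the profile \eqref{VT} produced by the concavification — we get $V_t=E^Q[V_T\mid\zeta_t=\zeta,\pi_t=\pi]$.

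I would then invoke the Fourier representation of $V_T$ derived in Appendix~\ref{optfinval}. The concavified profile is piecewise defined in $\zeta_T$: it behaves like a constant multiple of $\zeta_T^{-1/\gamma}$ both as $\zeta_T\to 0^{+}$ (the over-performance, Merton-type branch) and as $\zeta_T\to+\infty$ (the under-performance, Merton-type branch), with bounded kink-and-jump pieces in between. Setting $\ell=\ln\zeta_T$, this two-sided growth prevents a single exponential damping from making $\ell\mapsto V_T(e^{\ell})e^{-R\ell}$ integrable, and forces the splitting $V_T=\sum_{j=1}^4\varphi_j(\zeta_T)$ into four pieces, each with its own damping parameter $R_j$ chosen so that $\ell\mapsto\varphi_j(e^{\ell})e^{-R_j\ell}$ is integrable on $\R$: the two pieces carrying the $\zeta_T^{-1/\gamma}$ tails force $R_1<-1/\gamma$ (left tail, $\ell\to-\infty$) and $R_4>-1/\gamma$ (right tail, $\ell\to+\infty$), while $R_2,R_3$ may be placed on any line where the transforms of the bounded middle pieces converge. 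Fourier inversion of each damped piece gives
\[
V_T=\frac{1}{2\pi}\sum_{j=1}^{4}\int_{-\infty}^{+\infty}\hat\varphi_j(u+iR_j)\,\zeta_T^{\,R_j-iu}\,du,
\]
with $\hat\varphi_j(u+iR_j)$ the Fourier transform of $\ell\mapsto\varphi_j(e^{\ell})e^{-R_j\ell}$, obtained in closed form (and shown to be integrable in $u$) in the appendix.

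Taking $E^Q[\,\cdot\mid\zeta_t=\zeta,\pi_t=\pi]$ of both sides and interchanging expectation and integral yields part (i). This interchange is where hypothesis \eqref{not_explodes} is used: since $\lvert\zeta_T^{\,R_j-iu}\rvert=\zeta_T^{R_j}$, the conditional triangle inequality gives $\lvert H(t,\zeta,\pi;R_j-iu)\rvert\le E^Q[\zeta_T^{R_j}\mid\zeta_t=\zeta,\pi_t=\pi]=H(t,\zeta,\pi;R_j)<\infty$ uniformly in $u$; together with $\hat\varphi_j(\cdot+iR_j)\in L^1$ this justifies Fubini, and it also gives $V_T\in L^1(Q)$, which validates the martingale step above. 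Using the Markov property once more and the analytic continuation in $z$ of the representation in Lemma~\ref{lemmaH}, $E^Q[\zeta_T^{\,R_j-iu}\mid\zeta_t=\zeta,\pi_t=\pi]=H(t,\zeta,\pi;R_j-iu)$, and \eqref{Vt} follows.

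For part (ii) I would write $W^\star_t=V(t,\zeta_t,\pi_t)\,Y_t$, with $V$ smooth in $(\zeta,\pi)$ by differentiation under the integral in \eqref{Vt} (legitimate by the same uniform bound on $H$ and smoothness of $H$), and apply It\^o's formula. From the state price density $d\xi_t/\xi_t=-r\,dt-\pi_t\,dI_t$ and \eqref{Y_pi} one obtains $d\zeta_t/\zeta_t=(\,\cdots\,)\,dt+(\gamma\beta\sigma-\pi_t)\,dI_t$, while by \eqref{eq:cond_exp} the diffusion coefficient of $d\pi_t$ is $R_t+\rho\sigma_X$. Hence the $dI_t$-coefficient of $dW^\star_t=Y_t\,dV_t+V_t\,dY_t+d\langle V,Y\rangle_t$ equals $Y_t\big(\frac{\partial V}{\partial\zeta}\zeta_t(\gamma\beta\sigma-\pi_t)+\frac{\partial V}{\partial\pi}(R_t+\rho\sigma_X)\big)+V_tY_t\beta\sigma$; matching it with the diffusion term $W^\star_t\theta_t\sigma\,dI_t=V_tY_t\theta_t\sigma\,dI_t$ of \eqref{W_pi} and dividing by $\sigma V_tY_t$ gives exactly \eqref{theta_t}, the drift condition then matching automatically because $W^\star_t$ is, by construction, the value of a self-financing portfolio. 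One should finally check that the resulting $\theta$ meets the admissibility requirement \eqref{eq:integrability}, which follows from growth bounds on $V,V_\zeta,V_\pi$. I expect the main obstacle to be the Fourier representation of $V_T$ in the appendix — matching the damping exponents $R_j$ to the two-sided $\zeta_T^{-1/\gamma}$ growth of the discontinuous concavified profile and producing $L^1$ closed-form transforms $\hat\varphi_j$ — the remaining probabilistic and It\^o steps being comparatively routine.
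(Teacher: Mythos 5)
Your argument is correct and follows essentially the same route as the paper's (deliberately brief) proof: market completeness and the martingale property under $Q$ give $V_t=E^Q[V_T\mid\F^S_t]$, the concavified terminal profile is inverted via damped Fourier transforms with the $R_j$ chosen to control the two $\zeta_T^{-1/\gamma}$ tails, and the strategy is read off by matching the $dI_t$-coefficient of $d(Y_tV(t,\zeta_t,\pi_t))$ with that of the wealth equation. You in fact supply more detail than the paper, which delegates part (i) to Nicolosi et al.\ (2018); your justification of the Fubini interchange via \eqref{not_explodes} and of the sign constraints on $R_1,R_4$ is consistent with the formulas for $\hat\varphi_j$ in Appendix \ref{optfinval}.
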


A sketch of the proof of Proposition \ref{Vtthm} is given in Appendix \ref{lemma}.
It is important to notice that Proposition \ref{Vtthm} provides a useful semi-closed solution for the optimal relative wealth and the optimal strategy.
Examples of applications of those formulas are given in the next Section \ref{NumIllustration}.


\section{A numerical illustration}
\label{NumIllustration}

In this part of the paper we  study the implications of considering parameter uncertainty  on optimal  strategies of a portfolio manager subject to implicit incentives.
The goal is to show that the impact depends on a combined effect of risk aversion and of market conditions.
In a relatively stable market (i.e. low volatility) with lower expected returns, portfolio managers tend to increase their exposure to the risky asset when underperforming the benchmark and  decrease it when overperforming. The opposite happens when the market is more volatile and expected returns are higher.
Risk-aversion has a direct influence on the view of the manager regarding the uncertain estimates. Managers with a risk-aversion parameter larger than $1$, fear that the true value may be below the current estimate and hence reduce their leverage. On the contrary, managers with risk aversion lower than $1$ hope that the true value may be higher than the current estimate and hence tilt their strategy in the opposite way.

To illustrate such behavior with an example, we consider a simplified version of our model,
where the  market price of risk is constant but unknown, and is represented by a random variable $X_0$. The manager who is uncertain about the value of the market price of risk assumes that the exact value of $X_0$ is drawn from a normal random variable with mean $\pi_0$ and variance $R_0$.
This setting is analogous  to that of Brennan (1998), where the manager does not know the  value of the drift of the price process and can only estimate its expected value $m_0$, which is related to the market price of risk $\pi_0$ by
$$\pi_0= \frac{m_0-r}{\sigma}.$$
Setting $\lambda=0$ and $\sigma_X=0$, we get   a  stochastic conditional mean
\begin{equation}
 \pi_t= R_t d I_t, \quad \pi_0 \in \R,
\end{equation}
and conditional variance
\begin{equation} \label{cond_var}
R_t=\frac{R_0}{R_0 t +1}.
\end{equation}

To highlight the effects of uncertainty of parameter estimates, as a comparison we use the strategy of a manager who believes that she knows exactly the value $X_0$ of the market price of risk. We call this manager {\em myopic} because she does not adjust her strategy to
hedge for future  changes on the estimates.
We denote by  $\theta^0_t$  the fraction of wealth invested by the myopic manager in the risky asset and by $\theta_t$ the optimal strategy of the partially informed manager given by equation \eqref{theta_t}. For comparison reason,  we also consider the Merton level  $\theta_N = \frac{1}{\gamma}\frac{\mu_0 - r}{\sigma^2} $ that is the optimal investment of the myopic manager  who  optimizes only the utility of terminal wealth, without other incentives.

Figure \ref{strategy_ec_a} represents the myopic strategy $\theta^0_t$ (dotted line) and the optimal strategy under partial information $\theta_t$ (continuous line) as functions of the relative return of the portfolio with respect to the benchmark, that is $R_t^W-R_t^Y$, at time $t=0.25$, either for $\gamma = 0.8$ (left panels) or for $\gamma = 2$ (right panels). The parameters of the implicit incentives structure at time $T=1$ in Equation \eqref{eq:compensation} are the same as in Figure \ref{PayoffFig}.

\begin{figure}[htbp]
\centering
\includegraphics[trim={2cm 0 0 0},clip, scale = 0.6]{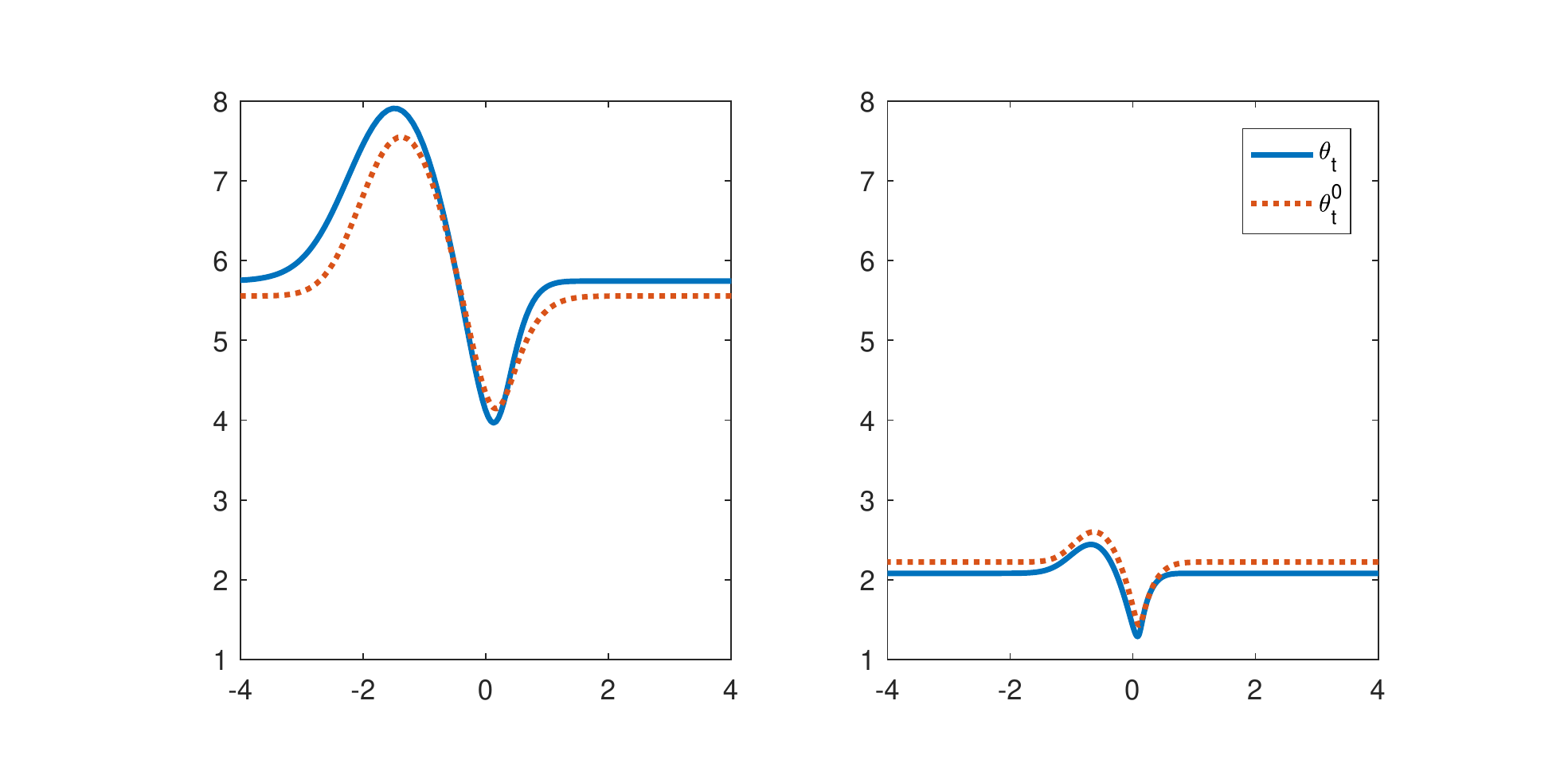}
\includegraphics[trim={2cm 0 0 0},clip, scale = 0.6]{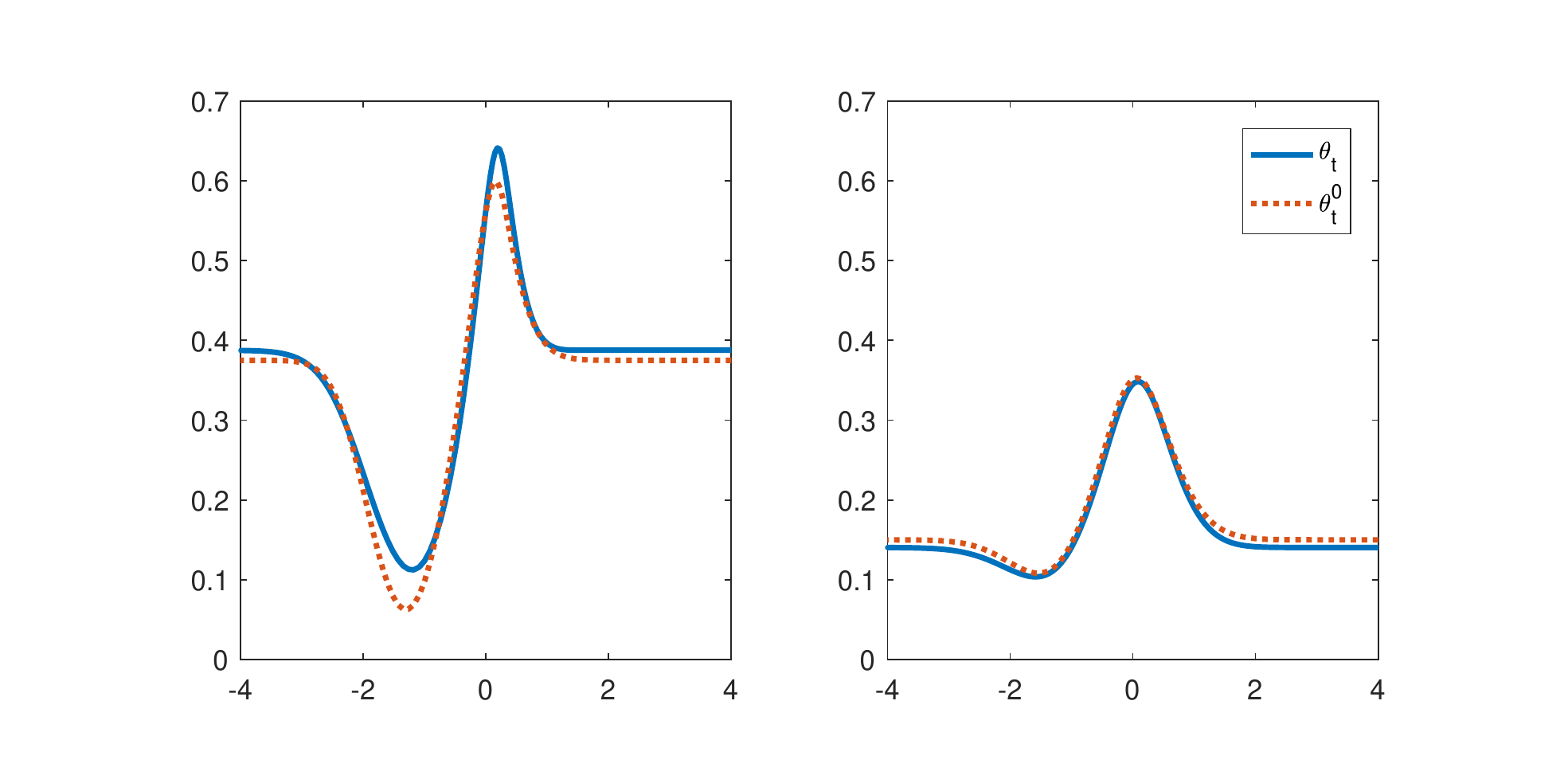}
\caption{Optimal strategies for different economies and different managers.
The optimal strategy $\theta_t$ (continuous line) and the myopic one $\theta^0_t$ (dotted line) at time $t=0.25$ are reported as functions of the relative return of the portfolio with respect to the benchmark $R_t^W-R_t^Y$.
Left panels represent the strategies of managers with risk aversion $\gamma=0.8$,
right panels those of more risk-averse managers ($\gamma = 2$).
Top panels represent economy (a), that is  a less volatile market with higher returns ($\sigma = 0.15$, $\pi_t=0.667$),  the bottom panels are referred to  economy (b), a more volatile market with lower returns ($\sigma = 1$, $\pi_t=0.3$).
The parameters of the payoff function are the same as in Figure \ref{PayoffFig}. The others parameters are $T=1$, $r = 0$, and $R_0=0.09$.}
\label{strategy_ec_a}
\end{figure}

The top panels  show the strategies when the Merton level is higher than the investment in the risky asset of the benchmark portfolio, that is when $\theta_N > \beta$. This setting corresponds to a market situation with relatively small volatility and returns, that is called Economy (a) by Basak et al. (2007),  obtained by taking $\sigma = 0.15$, $r = 0$ and $m_0 = 0.1$ in our model.

The bottom panels show the strategies when the Merton level is lower than investment in the risky asset of the benchmark, that is when $\theta_N < \beta$. This is Economy (b) in Basak et al. (2007) which describes a more volatile and remunerative market and it is obtained by setting $\sigma = 1$, $r = 0$ and $m_0 = 0.3$. The level of uncertainty on the initial estimate is given by $R_0 = 0.09$ for all the panels.

The panels on the left represent a less risk averse manager $\gamma=0.8$, those on the right a more risk-averse one $\gamma=2$.
By comparing the left to the right panels  we  see that the investment in risky asset decreases with risk aversion for both Economies (a) and (b).
The strategies for a myopic and a non-myopic manager are qualitatively similar to each other but the non-myopic manager invests always more or less than the myopic one, depending on the risk-aversion parameter. When the risk aversion parameter $\gamma$ is equal to $1$ (that is the case of  logarithmic utility), the strategies of the myopic and of the non-myopic manager coincide. A non-myopic manager with a risk aversion smaller than $1$ (left panels) tends to be more exposed to the risky asset than a myopic manager with the same risk aversion. In this case, the non-myopic manager acts optimistically, as she believes that, increasing the precision of the estimates of the market price, the correct value will be higher than the current one. Instead, the more risk-averse manager (right panels) is pessimistic and reduces the exposition to the risky asset fearing that the future estimates will be lower than the current one.
By comparing  top to  bottom panels in Figure \ref{strategy_ec_a}  we see  the effects of the overall economic condition on the optimal strategy, depending on the current results of the portfolio management strategy.
When the relative performance is either too low or too high for the incentives to have an effect on the final reward, the optimal strategy approaches a constant level that corresponds to the optimal risky exposure  without   incentives and hence the myopic investment converges  to the Merton level.  If the manager is underperforming but  still hopes to recover, or if she is slightly ahead, but still fearing to end behind, she adjusts the portfolio strategy in a way that depends on the economic conditions.
In the case of economy (a) (top panels), she  increases the exposure when trailing and  decreases it when leading. The economy (b), representing a  more volatile market and  higher expected returns (bottom panels), induces the same manager to take  opposite choices.

\section{Conclusions}\label{sec:conclusions}

We  studied a portfolio optimization problem for a manager who is compensated depending on the performance of her portfolio relative to a benchmark. The manager can invest in a risk-free asset and in a risky asset whose return depends on a latent variable representing the market price of risk. Hence she solves an optimization problem under partial information. Due to the implicit incentives given by the funds flow to relative performance, the utility function of the manager is not concave and hence existence of the optimum does not trivially hold. We solve the optimization problem using the martingale approach and a concavification procedure. This approach can be successfully applied due to completeness of the market under partial information. Optimal wealth and consequently the optimal strategy are characterized in a semi-explicit form via Fourier transform.
We  illustrated our results with an example, where we assume that the market price of risk is constant but unknown. We observed that the level of risk aversion has an influence on the manager's estimate of the market conditions, and consequently on her investment choices. Managers with a small risk aversion parameter are optimistic: they tend to increase their investment in the risky assets compared to myopic managers, believing that the true value of the market price of risk (and hence of the asset return) is more favourable than her estimate.
It is also seen that if the market is not subject to large fluctuations, managers invest more in risky assets when they are underperforming the benchmark, in anticipation to retrieve benchmark revenues, and invest less in the risky asset when overperforming, to avoid possible downward movements of the market.

\appendix
\section{Optimal final wealth}\label{optfinval}
In this section we characterize the final wealth relative to the benchmark $V_t$, for every $t \in [0,T]$. We first consider the optimal final wealth relative to the benchmark, given by the random variable $V_T= \frac{ W^*_T}{Y_T}$. Its expression has been computed in Basak et al. (2007) and also reported in Nicolosi et al. (2018), and in our framework, is given by
\begin{equation}
V_T = \varphi_1 (\zeta_T; \zeta^1) \ + \
\varphi_2 (\zeta_T; \zeta^1,\zeta^2) \ + \
\varphi_3 (\zeta_T; \zeta^2,\zeta^3) \ + \
\varphi_4 (\zeta_T; \zeta^3)
\label{VT}
\end{equation}
where $\zeta_T = \xi_T Y_T^{\gamma}$ and functions $\varphi_j$, for $j=1, \dots, 4$ are
\begin{align}
&\varphi_1 (\zeta; \zeta^1) =  f_H^{1/\gamma-1} y^{-1/\gamma} \zeta^{-1/\gamma}
\textbf{1}_{\zeta<\zeta^1}\label{varphi}\\
&\varphi_2 (\zeta; \zeta^1, \zeta^2)  =
e^{\eta_H}\textbf{1}_{\zeta^1\leq\zeta<\zeta^2}
\label{varphi2}\\
&\varphi_3 (\zeta; \zeta^2, \zeta^3)  =  h(\zeta)\textbf{1}_{\zeta^2\leq\zeta<\zeta^3}
\label{varphi3}\\
&\varphi_4 (\zeta; \zeta^3)  =  f_L^{1/\gamma-1} y^{-1/\gamma} \zeta^{-1/\gamma}
\textbf{1}_{\zeta \geq \zeta^3}.\label{varphi4}
\end{align}
The value $y\in \R$ is the Lagrange multiplier that ensures that the budget constraint of the optimization problem $ w = E_0\left[ \xi_T  W_T^*\right]$ is satisfied; the function $h(\zeta)$ is the solution of the equation
\begin{equation}
\frac{d}{d V} u(V f_L + V \psi (\ln V-\eta_L)) = y \zeta.
\label{fnh}
\end{equation}
Parameters $\zeta^1,\zeta^2$ and $\zeta^3$, and hence the value of $V_T$, depend on the following relation, called {\em Condition A}:
\begin{equation}\frac{\gamma}{1-\gamma} \left( \frac{f_H
+\psi}{f_L}\right)^{1-1/\gamma}+ \left( \frac{f_H +\psi}{f_H}\right)-
\frac{1}{1-\gamma} \geq 0 .
\label{cond1}
\end{equation}
This condition is related to concavification and defines the region in which the optimum of the maximization problem with the utility function and the optimum of the optimization problem where the utility function is replaced by the smallest concave function above it, is reached in a point where the two functions coincide.

Proposition 2 of Basak et al. (2007) shows that, if {\em Condition A} holds, then $\zeta^1,\zeta^2$ and $\zeta^3$ in  \eqref{VT} satisfy
$\zeta^1 = f_H^{1-\gamma}e^{-\gamma\eta_H}/y$, and
$\zeta^3=\zeta^2>\zeta^1$ satisfying $g(\zeta^3) = 0$,  with $$g(\zeta) =
\left(\gamma\left(\frac{y}{f_L}\zeta\right)^{1-1/\gamma}-(f_H
e^{\eta_H})^{1-\gamma}\right)/(1-\gamma)+e^{\eta_H}y\zeta.$$
Hence, in this case, $\varphi_3 (\zeta; \zeta^2, \zeta^3) $ is the indicator function of the empty set and therefore it is zero.

When {\em Condition A}  is not met, Basak et al. (2007) show in Appendix C that $\zeta^1 =  f_H^{1-\gamma}e^{-\gamma\eta_H}/y$, $\zeta^2 = (e^{\eta_H}f_H)^{-\gamma}(f_H+\psi)/y$ and
$\zeta^3 = (f_L\underline{V})^{-\gamma}f_L/y$ with  $\underline{V}$ being the left boundary of the region where the objective function is not concave. Denoting with $\overline{V}$ the right boundary of the non concave region, $\underline{V}$ and $\overline{V}$ can be computed as the points where the straight line between these two points is tangent to the objective function.

Next, we provide a representation for the function $\hat \varphi_j(z)$, $j=1,\ldots,4$, which are used to compute the optimal relative wealth $V_t$ given in  Proposition \ref{Vtthm}. The functions $\hat \varphi_j(z)$, $j=1,\ldots,4$,  are the Fourier transforms of the functions in \eqref{varphi}-\eqref{varphi2}-\eqref{varphi3}-\eqref{varphi4} and they are given by
\begin{eqnarray}
\hat \varphi_1(z) &=& f_H^{1/\gamma-1}
y^{-1/\gamma}\frac{(\zeta^1)^{-1/\gamma+iz}}{-1/\gamma+iz}
\label{phihat_1} \\
\hat \varphi_2(z) &=&  e^{\eta_H} \frac{(\zeta^2)^{iz} -
(\zeta^1)^{i z}}{iz}  \label{phihat2} \\
\hat \varphi_4(z) &=&  f_L^{1/\gamma-1}
y^{-1/\gamma}\frac{(\zeta^3)^{-1/\gamma+iz}}{-1/\gamma+iz}.\label{phihat3}
\nonumber
\end{eqnarray}
Numerical computations of the Fourier transform $\hat \varphi_3 (z)$, which are needed only when {\em Condition A} is not satisfied, are given in Section 4.1 of Nicolosi et al. (2018)\cite{NHA}.

\section{Proofs} \label{lemma}

This section contains the proofs of Lemma \ref{lemmaH} and Proposition \ref{Vtthm}.

\begin{proof}[Proof of Lemma  \ref{lemmaH}]
We define the process
$$ I^Q_t=I_t- \int_0^t (\sigma \beta - \pi_s)ds.$$
By Girsanov Theorem this is a $Q$-brownian motion (see, e.g. Chap. 26 of Bj\"ork (2009)\cite{B}).
Then the $Q$-dynamics of the filter $\pi_t$ is given by
\begin{equation*}
d\pi_t = \left( \lambda (\bar{X} - \pi_t ) +(R_t+\rho \sigma_X) (\beta \sigma
-\pi_t)\right)dt +  (R_t+\rho\sigma_X) dI^Q_t.
\end{equation*}
Using Ito's Lemma, we get that $\zeta_t= \xi_t Y^\gamma$ under $Q$ has the following dynamics
\begin{eqnarray*}
\frac{d\zeta_t}{\zeta_t} &=& \left(r(\gamma-1) +
\frac{1}{2}\gamma(\gamma+1) \beta ^2 \sigma^2 -
(\gamma+1)\beta \sigma \pi_t+\pi_t^2 \right) dt \\
&+&(\gamma\beta \sigma -\pi_t)dI_t^Q.
\end{eqnarray*}
Since the process $H(t,\zeta_t,\pi_t;z)$  is a $(\bF^S, Q)$-martingale, equating the $dt$-term to zero leads to the partial differential equation (for simplicity we drop the
arguments of the functions)
\begin{eqnarray}
0 &=& \frac{\partial H}{\partial t} +\frac{\partial H}{\partial
\zeta}\zeta \left( r(\gamma-1) +
\frac{1}{2}\gamma(\gamma+1)\beta^2 \sigma^2  -
(\gamma+1)\beta \sigma \pi+\pi ^2  \right) \nonumber\\
&+&\frac{\partial H}{\partial \pi} \left( \lambda_X (\bar{X} - \pi ) +
(R_t+\rho\sigma_X) (\beta \sigma -\pi) \right)+
\frac{1}{2} \frac{\partial^2 H}{\partial
\pi^2} (R_t+\rho\sigma_X)^2\nonumber \\
&+&\frac{\partial^2H}{\partial \zeta\partial \pi}\zeta (R_t+\rho\sigma_X) (\gamma \beta
\sigma -\pi) + \frac{1}{2}\frac{\partial^2 H}{\partial \zeta^2}\zeta^2
(\gamma\beta \sigma -\pi)^2
\label{PDEXSTOCN}
\end{eqnarray}
with the boundary condition at
time $T$
\begin{equation}  \label{boundary_condition}
H(T,\zeta,\pi;z) = \zeta^{z}, \quad \zeta \in \mathbb{R}^+, \pi \in \mathbb{R}, z \in
\mathbb{C}.
\end{equation}
We use a similar approach as in the optimization problem under full information (see Nicolosi et al. (2018)\cite{NHA}), and consider an exponential-polynomial ansatz of the type
\begin{equation}
H(t,\zeta,\pi;z) = \zeta^{z}  e^{A(t;z) + B(t;z) \pi+\frac{1}{2}
C(t;z) \pi^2}
\label{HXSTOCN2}
\end{equation}
where $A(t;z)$, $B(t;z)$ and $C(t;z)$ are deterministic functions. From the boundary condition \eqref{boundary_condition} we
get that
\begin{equation}
A(T;z)= 0, \quad B(T;z)=0, \quad C(T;z) = 0.
\label{BC_X}
\end{equation}
Moreover, substituting the partial derivatives of the function $H$ 
into \eqref{PDEXSTOCN} and imposing that coefficients of $\pi^2$, $\pi$ and the
constant terms are equal to zero, we obtain the  system of ordinary differential equations for $A(t;z)$, $B(t;z)$ and $C(t;z)$
\begin{align}
\frac{\partial C}{\partial t} =& -(R_t+\rho\sigma_X)^2 C^2 + 2 (\lambda_X +
(1+z) (R_t+\rho\sigma_X)) C-z(z+1)
\label{R1} \\
\frac{\partial B}{\partial t} =& \left(\lambda_X + (z+1) (R_t+\rho\sigma_X) -
(R_t+\rho\sigma_X)^2 C   \right)B \nonumber\\
&-  \left(\lambda_X\bar{X} + (1+z \gamma) \beta \sigma (R_t+\rho\sigma_X)
\right) C +z (1+z\gamma) \beta \sigma  \label{R2}  \\
\frac{\partial A}{\partial t} =&  z r(1-\gamma) - \frac{1}{2}z \gamma
(1+z \gamma) \beta^2 \sigma^2 -\frac{1}{2} (R_t+\rho\sigma_X)^2 B^2  \nonumber  \\
&- ( \lambda_X \bar{X} + (1+z \gamma) \beta \sigma (R_t+\rho\sigma_X)) B - \frac{1}{2} (R_t+\rho\sigma_X)^2 C. \label{R3}
\end{align}
Notice that this is a coupled system of equations of Riccati type, with non-homogeneous coefficients.
\end{proof}

\begin{proof}[Proof of Proposition \ref{Vtthm}]
The proof of part $(i)$ follows the same lines of  Nicolosi et al. (2018)\cite[Proposition 2.1]{NHA}. Here we summarize the idea. Since the market model under partial information is complete, after applying concavification we get that the relative final wealth $V_T$ is given by the formula \eqref{VT} in Appendix \ref{optfinval}.
Plugging the expression of $V_T$ into $V_t=E^Q\left[V_T|\F^S_t\right]$ and then using Fourier transform we can calculate the value at  time $t$ of the optimal relative value.
For part $(ii)$, we first determine the dynamics of $ W^\star_t = Y_t V_t= Y_t V(t,\zeta_t,\pi_t)$ via Ito's product rule. Then comparing this equation with equation \eqref{W_pi} provides the expression for $\theta_t$ in \eqref{theta_t}.
Notice that the integrals in \eqref{Vt} are principal value integrals and the partial derivatives of the function $V$ in \eqref{theta_t} can be  computed from \eqref{Vt} by taking the derivative under the integral sign.
\end{proof}

\section{Solutions to non-Homogeneous Riccati ODEs}\label{app:ODEs}

We discuss the solution of the system of non-homogeneous system of Riccati equations arising in the expression of the conditional moment generating function of $\ln(\zeta_T)$.
Precisely, we show how to solve the system of equations \eqref{R1} -- \eqref{R2}. Equation \eqref{R3} can be computed by direct integration, and we do this numerically. Following, for instance, Brendle (2006) and Colaneri et al. (2020), it can be proved that the functions $B$ and $C$ satisfy

\begin{eqnarray}
C(t;z) &=& \frac{C^o(t;z)}{1+\frac{1}{z} C^o(t;z) R_t} \nonumber \\
B(t;z) &=& \frac{B^o(t;z)}{1+\frac{1}{z} C^o(t;z) R_t} \nonumber
\end{eqnarray}
for some functions $C^o(t;z)$ and $B^o(t;z)$ which solve the homogeneous system of Riccati equations below
\begin{eqnarray}
\frac{\partial C^o}{\partial t} &=& \left(\frac{1}{z}(1-\rho^2)-\rho^2 \right) \sigma_X^2 C^{o2} + 2 (\lambda_X +
(1+z) \rho\sigma_X) C^o-z(z+1)
\label{Rom1} \\
\frac{\partial B^o}{\partial t} &=& \left(\lambda_X + (z+1) \rho\sigma_X +
 \left(\frac{1}{z}(1-\rho^2)-\rho^2 \right) \sigma_X^2 C^o   \right) B^o \nonumber\\
&-&  \left(\lambda_X \bar{X} + (1+z \gamma) \beta \sigma \rho\sigma_X
\right) C^o +z (1+z\gamma) \beta \sigma  \label{Rom2}
\end{eqnarray}
with boundary conditions
\begin{equation}
B^o(T,z)=0, \quad C^o(T,z) = 0.
\label{BC_F}
\end{equation}
Equations \eqref{Rom1}--\eqref{Rom2} have a solution in closed form\footnote{Equations \eqref{Rom1}--\eqref{Rom2} are related to the conditional moment generating function of the process $\ln(\zeta_T)$, under full information. In fact, by Markovianity it holds that
$$
 \widetilde{H}(t,\zeta,\pi;z) :=E^Q \left[\zeta_T^{z} \vert  \F_t\right].
$$
Here using the ansatz
$\widetilde{H}(t,\zeta,\pi;z)= \zeta^{z}  e^{A^o(t;z) + B^o(t;z) \pi+\frac{1}{2}
C^o(t;z) \pi^2}$
we get that $B^o(t;z)$ and $C^o(t;z)$ solve \eqref{Rom1}--\eqref{Rom2} with the boundary condition \eqref{BC_F} and  $A^o(t;z)$ satisfies
\begin{align*}
\frac{\partial A^o}{\partial t} &=  z r(1-\gamma) - \frac{1}{2}z \gamma
(1+z \gamma) \beta^2 \sigma^2 -  ( \lambda_X \bar{X} + B^o (1+z \gamma) \beta \sigma \sigma_X)  - \frac{1}{2} \sigma_X^2 (C^o +B^{o2}),
\end{align*}
with the boundary condition $A^o(T;z)=0$.} see for instance Filipovi\'{c} (2009)\cite[Lemma 10.12]{filipovic2009}.

{}


\begin{thebibliography}{}

\bibitem{altay2018} Altay, S., Colaneri, K., and  Eksi, Z. (2018). Pairs trading under drift uncertainty and risk penalization. International Journal of Theoretical and Applied Finance, 21(7).

\bibitem{altay2019} Altay, S., Colaneri, K., and Eksi, Z. (2020). Optimal Converge Trading with Unobservable Pricing Errors. The Annals of Operations Research. https://doi.org/10.1007/s10479-020-03647-z.

\bibitem{BarucciMarazzina2015}
Barucci, E., and Marazzina, D. (2015). Risk seeking, nonconvex remuneration and regime switching. {\it International Journal of Theoretical and Applied Finance} 18(02),
DOI: http://dx.doi.org/10.1142/S0219024915500090.

\bibitem{BM2016} Barucci, E., and Marazzina, D. (2016). Asset management, High Water Mark and flow of funds. Operations Research Letters, 44(5), 607--611.

\bibitem{BMM2019} Barucci, E., Marazzina, D., and  Mastrogiacomo, E. (2019). Optimal investment strategies with a minimum performance constraint. Annals of Operations Research, 1--25.

\bibitem{BP2013} Basak, S., and Pavlova, A. (2013). Asset Prices and Institutional Investors. American Economic Review. 103 (5): 1728--1758

\bibitem{BPS2007}
Basak, S., Pavlova, A., and Shapiro, A. (2007).  Optimal asset allocation
and risk shifting in money management. {\it The Review of Financial Studies}
20(5), 1583-1621.

\bibitem{BPS2008}Basak, S., Pavlova, A., and Shapiro, A. (2008). Offsetting the implicit incentives:Benefits of benchmarking in money management. Journal of Banking and Finance. 32: 1883--1893





\bibitem{B} Bj\"ork, T. (2009). Arbitrage theory in continuous

\bibitem{BDL}
Bj\"ork, T., Davis, M. H. and Landen, C. (2010). Optimal investment under partial information. Mathematical Methods of Operations Research, 71(2), 371--399.

\bibitem{brendle2006} Brendle S. (2006). Portfolio selection under incomplete information. Stochastic Processes and their Applications,  116 (5), 701--723.

\bibitem{brennan1998} Brennan M. (1998). The role of learning in dynamic portfolio decisions. Eurepean Financial Review, 1, 295--306.


\bibitem{buraschi2014} Buraschi, A., Kosowski, R., and Sritrakul, W. (2014). Incentives and endogenous risk taking: A structural view on hedge fund alphas. The Journal of Finance, 69(6), 2819-2870.


\bibitem{carpenter2000}Carpenter, J. N. (2000). Does option compensation increase managerial risk appetite?. The journal of finance, 55(5), 2311-2331.

\bibitem{ceci2012} Ceci, C. and Colaneri, K. (2012). Nonlinear filtering for jump diffusion observations. Adv. Appl. Probab., 44(3),
678-701.

\bibitem{ceci2014} Ceci, C. and Colaneri, K. (2014). The Zakai equation of nonlinear filtering for jump-diffusion observations: existence and uniqueness. Applied Mathematics \& Optimization, 69(1), 47-82.

\bibitem{CP2009} Chen, H. L., and Pennacchi, G. G. (2009). Does prior performance affect a mutual fund's choice of risk? Theory and further empirical evidence. Journal of Financial and Quantitative Analysis, 745-775.

\bibitem{chevalier1997} Chevalier, J.,  Ellison, G. (1997). Risk taking by mutual funds as a response to incentives. Journal of political economy, 105(6), 1167-1200.



\bibitem{CHN2020} Colaneri, K., Herzel, S. and Nicolosi, M.  (2020). The value of knowing the market price of risk. Ann Oper Res. https://doi.org/10.1007/s10479-020-03596-7


\bibitem{Cox and Huang}
Cox, J.C. and Huang, C.F. (1989). Optimal Consumptions and Portfolio Policies when Asset Prices Follow a Diffusion Process. {\it Journal of Economic Theory}, 49, 33-83.


\bibitem{CK2011} Cuoco, D., and Kaniel, R. (2011). Equilibrium prices in the presence of delegated portfolio management. Journal of Financial Economics, 101(2), 264-296.






\bibitem{filipovic2009} Filipovic, D. (2009). Term-Structure Models. A Graduate Course. Springer.


\bibitem{fleming1982optimal} Fleming, W. H. and Pardoux, {\'E}. (1982). Optimal control for partially observed diffusions, SIAM Journal on Control and Optimization,
 20 (2), 261--285.









\bibitem{hata2018} Hata, H. and  Sheu, S. (2018). An optimal consumption and investment problem with partial information, Advances in Applied probability, 50,  131-153.

\bibitem{nic2018} Nicolosi, M. (2018). Optimal strategy for a fund manager with option compensation. Decisions Econ Finan 41, 1–17. https://doi.org/10.1007/s10203-017-0204-x

\bibitem{HN}
Herzel, S., Nicolosi, M. (2019). Optimal strategies with option compensation under mean reverting returns or volatilities. Comput Manag Sci 16, 47–69. https://doi.org/10.1007/s10287-017-0296-3



\bibitem{HKH2012} Huang, J. C., Wei, K. D., and Yan, H. (2012). Investor learning and mutual fund flows. In AFA 2012 Chicago Meetings Paper.





\bibitem{lackner1995} Lakner, P. (1995). Utility maximization with partial information. Stochastic processes and their applications, 56 (2), 247--273.

\bibitem{lackner1998} Lackner, P. (1998). Optimal trading strategy for an investor: the case of partial information. Stochastic Processes and their Applications, 76 (1), 77--97.

\bibitem{LWY2013} Lan, Y., Wang, N., and Yang, J. (2013). The economics of hedge funds. Journal of Financial Economics, 110(2), 300-323.

\bibitem{lee2016} Lee, S., and Papanicolaou A. (2016). Pairs trading of two assets with uncertainty in co-integration's level of mean reversion. International Journal of Theoretical and Applied Finance 19 (8), 1650054.


\bibitem{lind2016} Lindensjo, K. (2016). Optimal investment and consumption under partial information. Mathematical Methods of Operations Research, 83, 87--107.


\bibitem{lipster2001statistics}
Lipster, R.S. and Shiryaev, A.N. (2001). Statistics of Random Processes. Springer Verlag, Berlin.

\bibitem{MTG2019} Ma, L., Tang, Y. and Gomez, J. (2019). Portfolio manager compensation in the US mutual fund industry. The Journal of Finance. 76 (2): 587--638


\bibitem{merton1971} Merton, R. C. (1971). Optimum Consumption and Portfolio Rules in a Continuous-Time Model. Journal of Economic Theory 3:373–413.

\bibitem{NHA}
Nicolosi, M., Angelini, F. and Herzel, S. Ann Oper Res (2018) 266: 373. https://doi.org/10.1007/s10479-017-2535-y





\bibitem{xia2001} Xia, Y. (2001). Learning about predictability: the effects of parameter uncertainty on dynamic asset allocation.
Journal of Finance, 56, 205--246.


\end{thebibliography}
 \end{document}